\newtheorem{thm}{Theorem}[section]
\newtheorem{defi}[thm]{Definition}
\newcommand{\qedd}{\hspace*{\fill}$\Box$\medskip}
\def\deg{\hbox{\rm{deg\,}}}
\def\tr{\hbox{\rm{tr}}}
\def\wt{\hbox{\rm{wt}}}
\def\AI{\hbox{\rm{AI}}}
\begin{document}

\title{A remark on algebraic immunity of Boolean functions}

\author{Baofeng Wu\thanks{Key
Laboratory of Mathematics Mechanization, AMSS, Chinese Academy of
Sciences,
 Beijing 100190,  China. Email: wubaofeng@amss.ac.cn}, Jia Zheng\thanks{School of Mathematical Sciences, University of Chinese Academy of Sciences, Beijing 100049, China.
Email: zhengjia11b@mail.ucas.ac.cn}
 }
 \date{}

\maketitle

\begin{abstract}
In this correspondence, an equivalent definition of algebraic
immunity of Boolean functions is posed, which can clear up the
confusion  caused by the proof of optimal algebraic immunity of the
Carlet-Feng function and some other functions constructed by virtue
of Carlet and Feng's idea. \vskip .5em

\noindent\textbf{Keywords}\quad Boolean functions; Algebraic
immunity.

\end{abstract}


\section{Introduction}
\label{intro}

Due to the great success of algebraic attacks improved by Courtois
and Meier to such well-known stream ciphers as Toyocrypt and
LILI-128 \cite{AA03}, the notation of algebraic immunity of Boolean
functions was introduced in \cite{AI04} to measure the ability of
functions used as building blocks of key stream generators resisting
this new kind of attacks. In fact, the algebraic immunity of a
Boolean function is the smallest possible degree of such non-zero
Boolean functions that can annihilate it or its complement. For an
$n$-variable Boolean function, the algebraic immunity of it is upper
bounded by $\lceil\frac{n}{2}\rceil$ \cite{AA03}, and when this
upper bound is attained, it is often known as an algebraic immunity
optimal function, or an OAI function for short.

Constructing OAI functions, especially OAI functions together with
other good cryptographic properties such as balancedness, high
nonlinearity, high algebraic degree, etc., is an important problem
in cryptography thanks to the wide usage of Boolean functions in
stream cipher designs. In recent years, a lot of progress has been
made in this problem, and many OAI Boolean functions satisfying all
other main criteria have been constructed. Among all these
constructions, the one belonging to Carlet and Feng \cite{CF08}
seems to be most important since the clever idea proposed by them of
using univariate representations of Boolean functions and BCH bound
from coding theory in the proof of optimal algebraic immunity of the
constructed functions greatly influenced this field. In fact,
subsequent constructions given by Tu and Deng \cite{TD11}, Tang et
al. \cite{TDT13}, Jin et al. \cite{Jin11} and Zheng et al. \cite{ZW}
all adopt Carlet and Feng's idea.

However, we notice that the Carlet-Feng function, as well as all
subsequent functions, seem to suffer from a common problem in the
proof of optimal algebraic immunity of them. In fact, under
univariate representation, to prove an $n$-variable Boolean function
$f(x)$ have optimal algebraic immunity, the standard technique is to
assume the existence of an annihilator $g(x)$ of algebraic degree
less than $\lceil\frac{n}{2}\rceil$ and deduce all coefficients of
$g(x)$ are zero. But following this technique to prove optimal
algebraic immunity of $f(x)$ when it is the function they
constructed, Carlet and Feng neglected that the assumed $g(x)$ that
could annihilate $f(x)$ or $f(x)+1$ should be a Boolean function. In
fact, they proved that for any polynomial
$g(x)\in\mathbb{F}_{2^n}[x]/\langle x^{2^n}+x\rangle$ of algebraic
degree less that $\lceil\frac{n}{2}\rceil$, it must be null if it
could annihilate $f(x)$ or $f(x)+1$. Of course, this leads to the
optimal algebraic immunity of $f(x)$, but it seems that the
properties of $f(x)$ are stronger. In other words, it seems that
Carlet and Feng's construction can be generalized such that no
non-zero Boolean annihilator but possibly  some polynomial
annihilators with degree less that $\lceil\frac{n}{2}\rceil$ of the
constructed functions exist, which can also promise optimal
algebraic immunity of them.

Regretfully,  the above idea cannot be realized. In this
correspondence, we prove that a Boolean function has no non-zero
Boolean annihilator of degree less that $d$  if and only if it has
no polynomial annihilator of degree less that $d$. As a result, the
standard definition of algebraic immunity of Boolean functions can
be modified to an equivalent version, which is given in Section 2.


\section{An equivalent definition of algebraic immunity}
\label{sec:1}

Denote by $\mathbb{B}_n$ the $\mathbb{F}_2$-algebra formed by all
$n$-variable Boolean functions and let $f\in\mathbb{B}_n$. We
firstly recall the standard definition of algebraic immunity of $f$.

\begin{defi}\label{ai}
$$\AI(f)=\min_{0\neq g \in \mathbb{B}_{n}}\{\deg(g)\mid fg=0 ~\text{ or
}~(f+1)g=0\}.$$
\end{defi}

Now we consider the $\mathbb{F}_2$-algebra
$\mathbb{B}_n\otimes\mathbb{F}_{2^n}$. Since
\[\mathbb{B}_n\cong\mathbb{F}_2[x_1,\ldots,x_n]/\langle x_1^2+x_1,\ldots,x_n^2+x_n\rangle,\]
we have
\begin{eqnarray}\label{iso1}
\nonumber\mathbb{B}_n\otimes\mathbb{F}_{2^n}&\cong&\mathbb{F}_2[x_1,\ldots,x_n]/\langle
x_1^2+x_1,\ldots,x_n^2+x_n\rangle \otimes\mathbb{F}_{2^n}\\
&\cong&\mathbb{F}_{2^n}[x_1,\ldots,x_n]/\langle
x_1^2+x_1,\ldots,x_n^2+x_n\rangle.
\end{eqnarray}
Under isomorphism \eqref{iso1}, elements of
$\mathbb{B}_n\otimes\mathbb{F}_{2^n}$ can be viewed as $n$-variable
polynomials over $\mathbb{F}_{2^n}$ reduced modulo $\langle
x_1^2+x_1,\ldots,x_n^2+x_n\rangle$. On the other hand, for any
$F(x_1,\ldots,x_n)\in\mathbb{F}_{2^n}[x_1,\ldots,x_n]/\langle
x_1^2+x_1,\ldots,x_n^2+x_n\rangle$, it can induce a map from
$\mathbb{F}_{2}^n$ to $\mathbb{F}_{2^n}$, which induces a map from
$\mathbb{F}_{2^n}$ to $\mathbb{F}_{2^n}$ due to the existence of the
natural isomorphism $\mathbb{F}_{2}^n\cong\mathbb{F}_{2^n}$. Thus
$F(x_1,\ldots,x_n)$ corresponds to a polynomial in
$\mathbb{F}_{2^n}[x]/\langle x^{2^n}+x\rangle$. It is easy to see
that this correspondence promises an isomorphism between two
$\mathbb{F}_2$-algebras, i.e.
\begin{equation}\label{iso2}
\mathbb{F}_{2^n}[x_1,\ldots,x_n]/\langle
x_1^2+x_1,\ldots,x_n^2+x_n\rangle\cong\mathbb{F}_{2^n}[x]/\langle
x^{2^n}+x\rangle,
\end{equation}
through comparing dimensions of them. Thus we are clear that
\begin{equation}\label{iso3}
\mathbb{B}_n\otimes\mathbb{F}_{2^n}\cong\mathbb{F}_{2^n}[x]/\langle
x^{2^n}+x\rangle.
\end{equation}
That is to say, elements of $\mathbb{B}_n\otimes\mathbb{F}_{2^n}$
can also be distinguished with polynomials over $\mathbb{F}_{2^n}$
reduced modulo $(x^{2^n}+x)$. In this sense, $\mathbb{B}_n$ can be
viewed as an $\mathbb{F}_{2}$-subalgebra of
$\mathbb{B}_n\otimes\mathbb{F}_{2^n}$.

For any $g\in\mathbb{B}_n\otimes\mathbb{F}_{2^n}$, we define its
algebraic degree, denoted $\deg g$, to be the algebraic degree of
the elements in $\mathbb{F}_{2^n}[x_1,\ldots,x_n]/\langle
x_1^2+x_1,\ldots,x_n^2+x_n\rangle$ or $\mathbb{F}_{2^n}[x]/\langle
x^{2^n}+x\rangle$ respectively that are distinguished with $g$ under
the isomorphism \eqref{iso1} and \eqref{iso3} respectively (recall
that for any
$G(x)=\sum_{i=0}^{2^n-1}a_ix^i\in\mathbb{F}_{2^n}[x]/\langle
x^{2^n}+x\rangle$, its algebraic degree is defined as
\[\deg G=\max\{\wt(i)\mid a_i\neq 0\},\]
where ``$\wt(\cdot)$" represents the number of 1's in the binary
expansion of  a non-negative integer \cite{carletBFbook}). Note that
this will not cause confusion since the algebraic degree of any
$H(x_1,\ldots,x_n)\in\mathbb{F}_{2^n}[x_1,\ldots,x_n]/\langle
x_1^2+x_1,\ldots,x_n^2+x_n\rangle$ and the algebraic degree of the
$G(x)\in\mathbb{F}_{2^n}[x]/\langle x^{2^n}+x\rangle$ corresponding
to it under the isomorphism \eqref{iso2} coincide. A short proof of
this fact is like this: under isomorphism \eqref{iso2}, there exists
a basis $\{\beta_1,\ldots,\beta_n\}$ of $\mathbb{F}_{2^n}$ over
$\mathbb{F}_{2}$, such that
\begin{eqnarray*}
 H(x_1,\ldots,x_n)  &=& \sum_{i=0}^{2^n-1}a_i\left(\sum_{i=1}^{n}x_i\beta_i\right)^i \\
   &=&\sum_{i=0}^{2^n-1}a_i\left(\sum_{i=1}^{n}x_i\beta_i\right)^{\sum_{j=0}^{n-1}i_j2^j}  \\
   &=&\sum_{i=0}^{2^n-1}a_i\prod_{j=0}^{n-1}\left(\sum_{i=1}^{n}x_i\beta_i^{2^j}\right)^{i_j}
\end{eqnarray*}
if we assume $G(x)=\sum_{i=0}^{2^n-1}a_ix^i$. Hence
\begin{eqnarray*}
 \deg G  &=&\max\{\wt(i)\mid a_i\neq 0\}  \\
   &=&\max\left\{\left.\sum_{j=0}^{n-1}i_j\,\right|\, a_i\neq 0\right\}  \\
   &\geq&\deg H.
\end{eqnarray*}
Let $\mathcal{A}_d=\{F(x)\in\mathbb{F}_{2^n}[x]/\langle
x^{2^n}+x\rangle\mid \deg F\leq d\}$ and
$\mathcal{B}_d=\{F(x_1,\ldots,x_n)\in\mathbb{F}_{2^n}[x_1,\ldots,x_n]/\langle
x_1^2+x_1,\ldots,x_n^2+x_n\rangle\mid \deg F\leq d\}$ for any $0\leq
d\leq n$. Then it is obvious that $\mathcal{A}_d$ and
$\mathcal{B}_d$ are both vector spaces over $\mathbb{F}_{2^n}$ of
dimension $\sum_{k=0}^{d}{n\choose k}$. It is straightforward that
$\deg G=\deg H$.

Now we give the definition of modified algebraic immunity of the
Boolean function $f$, denoted by $\overline{\AI}(f)$.

\begin{defi}\label{gai}
$$\overline{\AI}(f)=\min_{0\neq g \in \mathbb{B}_{n}\otimes\mathbb{F}_{2^n}}\{\deg(g)\mid fg=0 ~\text{ or
}~(f+1)g=0\}.$$
\end{defi}

It is obvious that $\overline{\AI}(f)\leq\AI(f)$. In the following,
we devote to establish the equivalence between Definition \ref{ai}
and Definition \ref{gai}. For
$H(x_1,\ldots,x_n)\in\mathbb{F}_{2^n}[x_1,\ldots,x_n]/\langle
x_1^2+x_1,\ldots,x_n^2+x_n\rangle$ and a basis
$\{\beta_1,\ldots,\beta_n\}$ of $\mathbb{F}_{2^n}$ over
$\mathbb{F}_{2}$, it is easy to see that there exist
$h_1,~h_2,~\ldots,~h_n\in\mathbb{F}_{2}[x_1,\ldots,x_n]/\langle
x_1^2+x_1,\ldots,x_n^2+x_n\rangle$ such that
$H=\sum_{i=1}^nh_i\beta_i$. It is clear that
\[\deg H=\max\{\deg h_i\mid{1\leq i\leq n}\}.\]
Besides, for the $G(x)\in\mathbb{F}_{2^n}[x]/\langle
x^{2^n}+x\rangle$ corresponding to $H$ under the isomorphism
\eqref{iso2}, there exist
$g_1(x),~g_2(x),~\ldots,~g_n(x)\in\mathbb{F}_{2^n}[x]/\langle
x^{2^n}+x\rangle$ which are all Boolean functions such that
$G=\sum_{i=1}^ng_i\beta_i$ (in fact, $g_i(x)=\tr(\beta_i^*G(x))$ for
$1\leq i\leq n$ where $\{\beta_1^*,\ldots,\beta_n^*\}$ is the dual
basis of $\{\beta_1,\ldots,\beta_n\}$ and ``$\tr$" is the trace map
from $\mathbb{F}_{2^n}$ to $\mathbb{F}_{2}$), and we also have
\[\deg G=\max\{\deg g_i\mid{1\leq i\leq n}\}\]
since $\deg G=\deg H$. With all these preparations, we  can obtain
the following theorem.

\begin{thm}\label{eqi}
For any $f\in\mathbb{B}_n$, $\AI(f)=\overline{\AI}(f)$.
\end{thm}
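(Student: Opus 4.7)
The inequality $\overline{\AI}(f)\leq \AI(f)$ is already observed, so the task is to prove the reverse inequality $\AI(f)\leq \overline{\AI}(f)$. My plan is to exploit the basis decomposition that has just been set up before the theorem: every nonzero $g\in\mathbb{B}_n\otimes\mathbb{F}_{2^n}$ can be written uniquely as $g=\sum_{i=1}^{n}g_i\beta_i$, where the $g_i\in\mathbb{B}_n$ are ordinary Boolean functions and $\deg g=\max\{\deg g_i\mid 1\leq i\leq n\}$.

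First I would fix a nonzero $g\in\mathbb{B}_n\otimes\mathbb{F}_{2^n}$ realizing $\overline{\AI}(f)$, so that $fg=0$ (the case $(f+1)g=0$ is symmetric), and decompose $g=\sum_{i=1}^{n}g_i\beta_i$ as above. Then
\begin{equation*}
0=fg=f\sum_{i=1}^{n}g_i\beta_i=\sum_{i=1}^{n}(fg_i)\beta_i.
\end{equation*}
Here each $fg_i$ is an ordinary Boolean function, hence takes values in $\mathbb{F}_2\subseteq\mathbb{F}_{2^n}$. Evaluating the displayed equality at any point of $\mathbb{F}_2^n$ gives an $\mathbb{F}_2$-linear relation among $\beta_1,\ldots,\beta_n$; since $\{\beta_1,\ldots,\beta_n\}$ is an $\mathbb{F}_2$-basis of $\mathbb{F}_{2^n}$, each coefficient must vanish, so $fg_i=0$ for every $i$.

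Because $g\neq 0$, at least one $g_{i_0}\neq 0$, and this $g_{i_0}\in\mathbb{B}_n$ is a genuine Boolean annihilator of $f$ with
\begin{equation*}
\deg g_{i_0}\leq \max_{1\leq i\leq n}\deg g_i=\deg g=\overline{\AI}(f).
\end{equation*}
By Definition~\ref{ai} this forces $\AI(f)\leq\deg g_{i_0}\leq\overline{\AI}(f)$, and combining with the opposite inequality yields $\AI(f)=\overline{\AI}(f)$.

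The step I expect to be the substantive one is the passage from $\sum_i (fg_i)\beta_i=0$ to $fg_i=0$ for all $i$: this relies crucially on the fact that the coefficients $fg_i$ are $\mathbb{F}_2$-valued, so that $\mathbb{F}_2$-linear independence of the basis applies pointwise. Everything else is bookkeeping using the degree identity $\deg g=\max\deg g_i$ established in the excerpt.
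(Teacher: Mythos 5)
Your proof is correct and follows essentially the same route as the paper: decompose $g=\sum_i g_i\beta_i$ with $g_i\in\mathbb{B}_n$, use the pointwise $\mathbb{F}_2$-linear independence of the basis to conclude $fg_i=0$ for all $i$, and pick a nonzero $g_{i_0}$ of degree at most $\deg g$. If anything, you are slightly more careful than the paper in spelling out why the coefficients vanish and in explicitly selecting a nonzero $g_{i_0}$.
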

\begin{proof}
We use the univariate representation of $f$. We need only to prove
$\overline{\AI}(f)\geq\AI(f)$. Assume
$g(x)\in\mathbb{F}_{2^n}[x]/\langle
x^{2^n}+x\rangle\cong\mathbb{B}_{n}\otimes\mathbb{F}_{2^n}$
satisfies $gf=0$. For any basis $\{\beta_1,\ldots,\beta_n\}$ of
$\mathbb{F}_{2^n}$ over $\mathbb{F}_{2}$, there exist
$g_1(x),~g_2(x),~\ldots,~g_n(x)\in\mathbb{B}_{n}$  such that
$g=\sum_{i=1}^ng_i\beta_i$. Hence $\sum_{i=1}^ng_if\beta_i=0$, which
implies $g_if=0$ for all $1\leq i\leq n$. It follows that
\[\AI(f)\leq \max\{\deg g_i\mid 1\leq i\leq n\}=\deg g.\]
When $g$ satisfies $g(f+1)=0$, we can also get $\AI(f)\leq \deg g$.
Hence we have $\AI(f)\leq\overline{\AI}(f)$.\qedd
\end{proof}

From Theorem \ref{eqi}, we are clear that, for any Boolean function
$f\in\mathbb{B}_n$, it has no annihilator in $\mathbb{B}_n$ of
degree less that $d$ if and only if it has no annihilator in
$\mathbb{B}_n\otimes\mathbb{F}_{2^n}$ of degree less that $d$ for
any $1\leq d\leq n$. It can also be concluded that to prove a
Boolean function $f\in\mathbb{B}_n$ have optimal algebraic immunity,
we need  only to prove that there exists no
$g(x)\in\mathbb{F}_{2^n}[x]/\langle x^{2^n}+x\rangle$ with algebraic
degree less that $\lceil\frac{n}{2}\rceil$ such that $gf=0$ or
$g(f+1)=0$ if $f$ is represented by a univariate polynomial over
$\mathbb{F}_{2^n}$. This clears up our confusion with the proofs of
optimal algebraic immunity of the Carlet-Feng function and some
other functions constructed subsequently.

Similarly to the modification of the definition of algebraic
immunity, we can also modify other definitions related to AI of
Boolean functions. For instance, we can give a new and equivalent
version of definition of PAI functions studied in \cite{Mliu12}.

\begin{defi}
Let $f\in\mathbb{B}_n$. $f$ is said to be perfect algebraic immune
if for any positive integers $e<\frac{n}{2}$, the product $gf$ has
degree at least $n-e$ for any non-zero element
$g\in\mathbb{B}_n\otimes\mathbb{F}_{2^n}$ of degree at most $e$.
\end{defi}


\end{document}